\newtheorem{theorem}{Theorem}
\newtheorem{example}{Example}
\newtheorem{lemma}{Lemma}
\title{Improved Strength Four Covering Arrays with Three Symbols }
\author{Soumen Maity$^a$ $~~~~$ Yasmeen Akhtar$^a$\\ 
Reshma C. Chandrasekharan$^a$ $~~~~$ Charles J. Colbourn$^b$  \\
\it\small$^a$Indian Institute of Science Education and Research Pune, India\\
\it \small$^b$School of Computing, Informatics and Decision Systems Engineering,\\
\it \small Arizona State University, U.S.A
}
\date{}
\begin{document}
\maketitle

\begin{abstract}
A covering array $t$-$CA(n,k,g)$, of size $n$, strength $t$, degree $k$, and order $g$, is a $k\times n$ array 
on $g$ symbols such that   every $t\times n$ sub-array contains every $t\times 1$ column on $g$ symbols  at least once. 
Covering arrays have been studied for their applications to software testing, hardware testing, drug screening, and in areas 
where interactions of multiple parameters are to be tested. 
In this paper, we present an algebraic construction that improves many of the best known upper bounds on $n$ for 
covering arrays 4-$CA(n,k,g)$ with $g=3$. 
The {\it coverage measure} $\mu_t(A)$ of a testing array $A$ 
is defined by the ratio between the number of distinct $t$-tuples  contained in the column vectors of $A$ and the total number of 
$t$-tuples. A covering array is a testing array with full coverage.  The {\it covering arrays with budget constraints problem} is the problem of constructing 
a testing array of size at most $n$ having largest possible coverage measure, given values of $k,g$ and $n$.  This paper presents  several strength four testing arrays with 
high coverage.  The construction here is a generalisation of the construction 
methods used by Chateauneuf, Colbourn and   Kreher,  and Meagher and Stevens.

\end{abstract}







\section{Introduction}
 This article focuses on constructing  new strength-four covering arrays with $g=3$ and establishing  improved bounds on the covering array numbers 4-$CAN(k,3)$. 
 This article also presents solution to the covering arrays with budget constraints problem by constructing  many strength four testing arrays with high coverage. 
 A covering array $t$-$CA(n,k,g)$, of size $n$, strength $t$, degree $k$, and order $g$, is a $k\times n$ array on $g$ symbols such that   every $t\times n$ sub-array contains every $t\times 1$ column on $g$ symbols  at least once. It is desirable in most applications to minimise the size $n$ of covering arrays. The covering array number $t$-$CAN(k,g)$  is the smallest $n$ for which a
 $t$-$CA(n,k,g)$  exists. 
 An obvious lower bound is $$g^t\leq  t\mbox{-}CAN(k,g).$$ In this paper, we describe a construction method which is an  extension of the methods developed by  Chateauneuf, Colbourn and   Kreher \cite{chatea} and Meagher and Stevens \cite{karen}. This method improves 
 some of the best known upper bounds for strength four covering arrays with $g=3$. In the range of degrees considered in this paper, the best known results previously come from \cite{Colbourn14}; in that paper, covering arrays are also found by using a group action on the symbols (the affine or Frobenius group), but no group action on the rows is employed.  
 While for $g=3$ the group that we employ on the symbols coincides with the affine group, we accelerate and improve the search by also exploiting a group action on the rows as in \cite{chatea,karen}, and develop a search method than can be applied effectively whenever $g \geq 3$ and $g-1$ is a prime power.
 
\par There is a large literature \cite{chatea, HartmanDM} on covering arrays, and the problem of determining small covering arrays has been studied under many guises over the past thirty years. In \cite{HartmanDM}, Hartman and Raskin discussed several generalizations  motivated by their applications in the realm of software testing.   When testing a software system with $k$ parameters, each of which must be tested with $g$ values, the total number of possible test cases is $g^k$. For instance, if there are 20 parameters and three values for each parameter then the number of input combinations or test cases of this system is $3^{20}=3486784401$. 
A fundamental problem with software testing is that testing under all combinations of inputs is not feasible, even with a simple product \cite{kaner, Kuhn}. Software developers cannot test everything, but they can use combinatorial test design to identify the minimum number of tests needed to get the coverage they want.  The goal of most combinatorial testing research is to create test suites that find a large percentage of errors of a system while having a small number of tests required. Covering arrays prove useful in locating a large percentage of errors in software systems \cite{Cohen,yilmaz}. The test cases are the columns of a covering array  $t$-$CA(n,k,g)$. This is one of the five natural generalizations in \cite{HartmanDM}.

 \noindent{\it Covering arrays with budget constraints:} A practical limitation in the realm of testing is budget. 
In most software development environments, time, computing, and human resources needed to perform the testing of a component is
strictly limited. To model this situation, we consider the problem of creating  best possible test suite (covering the maximum number of 
$t$-tuples) within a fixed number of test cases. The coverage measure $\mu_t(A)$ of a testing array $A$ is defined by the ratio between the number of distinct t-tuples contained in the column vectors of  $A$ 
and the total number of $t$-tuples  given by $ {k\choose t}g^t $. 
 Our objective is to construct a testing array $A$ of size at most $n$  having largest possible coverage measure, given fixed values of $t, k, g$ and $n$. This problem is called  {\it covering arrays with  budget constraints}.\\

 We summarize the results from group theory that we use. 
Let $\mathbb{F}_q$ be a Galois field GF$(q)$ where $q=p^m$ and $p$ is prime. We adjoin to $\mathbb{F}_q$ the symbol $\infty$: it may be helpful to think of the resulting set 
 $$X=\mathbb{F}_q \cup \{\infty \}$$ as the projective  line consisting of $q+1$ points.  Recall that the projective general linear group of dimension 2 may be seen as the ``fractional linear group":
 $$ PGL(2,q)=\{\alpha~:~X \mapsto X ~|~ x\alpha=\frac{ax+b}{cx+d}, \mbox{ where  } a,b,c,d\in \mathbb{F}_q  \mbox{ and } ad-bc\neq 0\}$$ in which we define $\frac{1}{0}=\infty$, $\frac{1}{\infty}=
0$,  $1-\infty=\infty -1=\infty $, and $\frac{\infty}{\infty}=1$.  It is known that $|PGL(2,q)|=\frac{(q^2-1)(q^2-q)}{(q-1)}=(q+1)q(q-1)$ and its action on $\mathbb{F}_q\cup \{\infty\}$ is sharply 3-transitive.  For the undefined terms and more details 
see \cite[Chapter~7]{robinson}.  

 Pair-wise or 2-way interaction testing  and 3-way interaction testing are known to be effective for
different types of software testing \cite{Cohen,amity, Maity2005}.  However, software failures may be
caused by interactions of more than three parameters. A recent NIST study  indicates that failures can be triggered by interactions up
to 6 parameters \cite{Kuhn}.   Here we consider the problem of 4-way interaction testing of the parameters. 
The construction given in this paper improves many of the current best known upper bounds on 4-$CAN(k,g)$ with $g=3$ and $21\leq k\leq 74$.
This paper also presents several strength four testing arrays with high coverage measures.

\section{PGL Construction}  Let $X=GF(g-1) \cup \{\infty \}$ be the set of $g$ symbols on which we are to construct a 4-$CA(n,k,g)$.  
  We choose $g$ so that $g-1$ is a prime or prime power. \\
  
 \subsection{Case 1: Two starter vectors}
Our construction involves selecting  a group $G$ and finding  vectors $u,v\in {X}^k$, called starter vectors. We use the vectors to form a $k\times 2k$  matrix $M$.
\[ M=
\begin{pmatrix}
    u_1       & u_k  & \dots & u_2 & ~~& v_1       & v_k  & \dots & v_2\\
    u_2       & u_1  & \dots & u_3 & ~~&  v_2       & v_1  & \dots & v_3  \\
    \vdots   & \vdots&          &\vdots & ~~& \vdots   & \vdots&          &\vdots\\    
    u_{k-1}       & u_{k-2} & \dots & u_k & ~~& v_{k-1}       & v_{k-2} & \dots & v_k\\    
    u_k          &    u_{k-1} & \dots &  u_1  &~~&v_k          &    v_{k-1} & \dots &  v_1  
\end{pmatrix}.
\]
 Let $G=PGL(2,g-1)$. For each $a\in PGL(2,g-1)$, let $M^a$ be the matrix formed by the action of $a$ on the elements of $M$. The matrix obtained by 
 developing $M$ by $G$ is the $k \times 2k|G|$ matrix $M^G = [M^a:a\in G]$. Let $C$ be the $k\times g$ matrix  that 
 has a constant column with each entry equal to $x$, for each $x\in X$.  Vectors $u,v\in X^k$ are said to be  {\it starter vectors} for a $4$-$CA(n,k,g)$ if any $4\times 2k$ subarray of the   matrix $M$ has at least one representative from each non-constant orbit of $PGL(2,g-1)$ acting on 4-tuples from $X$. 
 Under this group action, there are precisely $g+11$  orbits of 4-tuples. These $g+11$ orbits are determined by the pattern of entries in their 4-tuples:
\begin{enumerate}
\item  $\{[a,a,a,a]^T: a\in X\}$
\item $\{[a,a,a,b]^T : a,b\in X, a\neq b\}$
\item  $\{[a,a,b,a]^T : a,b\in X,a\neq b\}$
\item  $\{[a,b,a,a]^T : a,b\in X, a\neq b\}$
\item $\{[b,a,a,a] ^T: a,b\in X, a\neq b\}$
\item $\{[a,a,b,b]^T : a,b\in X,a\neq b\}$
\item  $\{[a,b,a,b]^T : a,b\in X,a\neq b\}$
\item $\{[a,b,b,a]^T : a,b\in X,a\neq b\}$
 \item  $\{[a,a,b,c]^T : a,b,c\in X,a\neq b\neq c\}$
 \item  $\{[b,a,a,c]^T: a,b,c\in X,a\neq b\neq c\}$
\item  $\{[a,b,a,c]^T: a,b,c\in X,a\neq b\neq c\}$
\item  $\{[b,a,c,a]^T : a,b,c\in X,a\neq b\neq c\}$
 \item $\{[a,b,c,a]^T: a,b,c\in X,a\neq b\neq c\}$
 \item $\{[b,c,a,a]^T : a,b,c\in X,a\neq b\neq c\}$
 
 \item $g-3$ orbits of patterns with four distinct entries. The reason is this. There are $g(g-1)(g-2)(g-3)$ 4-tuples with four distinct entries and each orbit
 contains $g(g-1)(g-2)$ 4-tuples as $ |PGL(2,g-1)|=g(g-1)(g-2)$. \\
  \end{enumerate}
If starter vectors $u,v$ exist in $X^k$ (with respect to the group $G$) then there exists a 4-$CA(2kg(g-1)(g-2)+g,k,g)$. We give an example to explain the method.

\begin{example}\rm
Let $g=3$, $k=30$, $X=GF(2)\cup \{\infty \}$ and $G=PGL(2,2)$. The action of $G$ on 4-tuples from $X$ has 14 orbits:

\begin{center}
\begin{itemize}
\item [] Orb 1:  $[0000,\infty\infty\infty\infty, 1111]$
\item [] Orb 2: $[0001, 000\infty,\infty\infty\infty0,\infty\infty\infty1, 1110, 111\infty]$
\item [] Orb 3: $[1\infty\infty\infty, 1000, 0111,\infty000, 0\infty\infty\infty,\infty111]$
\item [] Orb 4: $[0100,\infty0\infty\infty, 0\infty00,\infty1\infty\infty, 1011, 1\infty11]$
\item [] Orb 5: $[11\infty1,\infty\infty1\infty, 0010, 1101, 00\infty0,\infty\infty0\infty]$
\item [] Orb 6: $[11\infty\infty,\infty\infty11, 0011, 1100, 00\infty\infty,\infty\infty00]$
\item [] Orb 7: $[\infty0\infty0, 0101,\infty1\infty1, 0\infty0\infty, 1010, 1\infty1\infty]$
\item [] Orb 8: $[\infty11\infty, 1\infty\infty1, 1001, 0110,\infty00\infty, 0\infty\infty0]$
\item [] Orb 9: $[11\infty0,\infty\infty10, 001\infty, 110\infty, 00\infty1,\infty\infty01]$
\item [] Orb 10: $[\infty0\infty1, 010\infty,\infty1\infty0, 0\infty01, 101\infty, 1\infty10]$
\item [] Orb 11: $[1\infty01, 0\infty10,\infty10\infty, 01\infty0,\infty01\infty, 10\infty1]$
\item [] Orb 12: $[1\infty0\infty, 0\infty1\infty,\infty101, 01\infty1,\infty010, 10\infty0]$
\item [] Orb 13: $[1\infty00, 0\infty11,\infty100, 01\infty\infty,\infty011, 10\infty\infty]$
\item [] Orb 14: $[1\infty\infty0, 100\infty, 011\infty,\infty001, 0\infty\infty1,\infty110]$
 
\end{itemize}
\end{center}
 The following  are starter vectors  to construct $[M^G,C]$, a 4-$CA(363,30,3)$:
$$u=(011\infty11\infty\infty\infty001\infty\infty\infty1\infty10\infty\infty0\infty1100\infty01)$$  
$$v=(11\infty\infty01101000\infty101\infty1\infty0\infty000010\infty\infty\infty).$$
 We used computer search to find $u$ and  $v$.  One can check that 
on each set of $4$ rows of $M$ there is a representative from each orbit $2-14$. 
Thus, 4-$CAN(30,3)\leq 363$. 
\end{example}

\subsection{Choice of starter vectors $u$ and $v$}   The problem is  to find two  vectors
$u,v\in X^k$ such that on each set of 
$4$ rows of $M$ there is a representative from each orbit $2-15$. 
To determine which vectors  work as starters, we define the sets $d[x,y,z]$ for positive integers $ x, y $ and $ z $ as follows:
\begin{multline*}
d[x,y,z] =\lbrace(u_{i},u_{i+x},u_{i+x+y},u_{i+x+y+z}): 0\leq i \leq k-1\rbrace \bigcup \\
\lbrace(v_{i},v_{i+x},v_{i+x+y},v_{i+x+y+z}): 0\leq i \leq k-1\rbrace\end{multline*}
where the subscripts are taken modulo $k$. 
For computational convenience, we partition the collection of $k\choose 4$ choices of four distinct rows from $k$ rows into disjoint equivalence classes.

Formally, let $S$ be the set of all $k\choose 4$ $4$-combinations of the set $\{1,2,...,k\}$. Define a binary relation $R$ on $S$ by putting

$$\{s_1,s_2,s_3,s_4\}~R~\{s_{1}^{\prime}, s_{2}^{\prime},s_{3}^{\prime},s_{4}^{\prime}\} \mbox{  iff} $$
$$\{s_1+d,s_2+d,s_3+d,s_4+d\}=\{s_{1}^{\prime}, s_{2}^{\prime},s_{3}^{\prime},s_{4}^{\prime}\} \mbox{ for some } d \in \mathbb{N}$$

\noindent where all of the addition is modulo $k$. 
Because $R$ is an equivalence relation on $S$, $S$ can be  partitioned into disjoint equivalence classes.
 The equivalence class determined by $\{s_1,s_2,s_3,s_4\}\in S$ is given by $$[\{s_1,s_2,s_3,s_4\}]=\{\{s_1+d,s_2+d,s_3+d,s_4+d\}|0\leq d \leq k-1\}.$$ 
 Without loss of generality, we may assume that $0=s_1 < s_2 <s_3 <s_4$ for each equivalence class representative $[\{s_1,s_2,s_3,s_4\}]$.  
 As an illustration, when $X=\{0,1,2,...,7\}$. $S$ is partitioned into 10 disjoint equivalence classes:\[ \begin{array}{ccccc}
 [\{0,1,2,3\}]&  [\{0,1,2,4\}]& [\{0,1,2,5\}]&  [\{0,1,2,6\}]& [\{0,1,3,4\}]  \end{array} \]
\[ \begin{array}{ccccc}
[\{0,1,3,5\}] & [\{0,1,3,6\}]& [\{0,1,4,5\}]& [\{0,1,4,6\}]& [\{0,2,4,6\}] \end{array} \]
 A distance vector $(x,y,z,w)$ is associated with every equivalence class $[\{s_1,s_2,s_3,s_4\}]$ where $x=s_2-s_1$, $y=s_3-s_2$, $z=s_4-s_3$, $w=s_1-s_4$ mod $k$. The fourth distance is redundant because $x+y+z+w=k$. 
 We rewrite the equivalence class of $4$-combinations $[\{s_1,s_2,s_3,s_4\}]$ as
$$[x,y,z]=\{i,i+x,i+x+y,i+x+y+z\}|i=0,1,2,...,k-1\}$$
For $k=8$, $[1,1,1]=[\{0,1,2,3\}]$, $[1, 1, 2]= [\{0, 1, 2, 4\}]$, $[1,1,3]=[\{0,1,2,5\}]$, $[1,1,4]=[\{0,1,2,6\}]$, $[1,2,1]=[\{0,1,3,4\}]$, $[1,2,2]=[\{0,1,3,5\}]$, $[1,2,3]=[\{0,1,3,6\}]$, $[1,3,1]=[\{0,1,4,5\}]$, $[1,3,2]=[\{0,1,4,6\}]$, $[2,2,2]=[\{0,2,4,6\}]$. 

\begin{lemma} Let $S$ be the set of all $4$-combinations of $\{1,2,3,...,k\}$. 
Then $S$ can be partitioned into disjoint equivalence classes 
$$[x,y,z]=\{i,i+x,i+x+y,i+x+y+z\}|i=0,1,2,...,k-1\}$$
where $x=1,2,...,\lfloor \frac{k}{4} \rfloor$, $y=x,x+1,...,k-1$ and $z=x,x+1,...,k-1$  such that
\begin{enumerate}
 \item [(i)] $2x+y+z < k$
 \item [(ii)] when $x=z$, $x\leq y \leq \lfloor \frac{k-2x}{2}\rfloor$
\end{enumerate}
There are no further classes distinct from these.
\end{lemma}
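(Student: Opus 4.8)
Identify $\{1,2,\dots,k\}$ with $\mathbb{Z}_k$, so that $R$ is precisely the orbit equivalence of the translation action of $\mathbb{Z}_k$ on the $4$-subsets of $\mathbb{Z}_k$. The plan is to set up a bijection between the $R$-classes and \emph{cyclic gap sequences}. To a $4$-subset $A=\{s_1<s_2<s_3<s_4\}$ I assign the quadruple of consecutive gaps $\gamma(A)=(s_2-s_1,\ s_3-s_2,\ s_4-s_3,\ k-s_4+s_1)$ read cyclically around $\mathbb{Z}_k$. First I would check that $\gamma(A)$ is a composition of $k$ into four positive parts, that replacing $A$ by a translate rotates $\gamma(A)$ cyclically and does nothing else, and conversely that two $4$-subsets whose gap quadruples agree up to a cyclic rotation are translates of one another. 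This identifies $S/R$ with the set of sequences $(d_1,d_2,d_3,d_4)$ of positive integers summing to $k$, taken up to the cyclic shift $(d_1,d_2,d_3,d_4)\mapsto(d_2,d_3,d_4,d_1)$.

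Next I would single out the canonical representative of each class as the lexicographically smallest rotation $(x,y,z,w)$ of its gap sequence; since then $w=k-x-y-z$, the class is recorded by $[x,y,z]$, with representative $4$-combination $\{0,x,x+y,x+y+z\}$ as in the text. From lexicographic minimality I would read off all the stated conditions. First, $x=\min_i d_i$, and four positive integers summing to $k$ have minimum at most $\lfloor k/4\rfloor$, so $1\le x\le\lfloor k/4\rfloor$. Comparing $(x,y,z,w)$ against the rotation $(y,z,w,x)$ gives $x\le y$, against $(z,w,x,y)$ gives $x\le z$, and against $(w,x,y,z)$ gives $x\le w$; thus $y,z\ge x$ (and $y,z\le k-1$, the parts being positive), while $x\le w$ reads $2x+y+z\le k$ (strict except for the balanced class below), which is condition~(i). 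Finally, when $x=z$ the comparison against $(z,w,x,y)$ has equal leading entries and is decided by the next one, forcing $y\le w=k-2x-y$, i.e.\ $y\le\lfloor(k-2x)/2\rfloor$, which is condition~(ii). I would also note that $2x+y+z=k$ forces all four gaps equal, i.e.\ the single balanced class $[k/4,k/4,k/4]$ that occurs precisely when $4\mid k$ and has a nontrivial stabiliser.

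For the claim that there are no further classes I would run this argument backwards: given a triple $[x,y,z]$ obeying the constraints, set $w:=k-x-y-z$, observe that condition~(i) makes $w$ a positive integer with $w\ge x$, so $(x,y,z,w)$ is an honest gap sequence, and verify directly that it is the lexicographically least of its four rotations. The comparisons with $(y,z,w,x)$, $(z,w,x,y)$ and $(w,x,y,z)$ reduce, once the ties $x=y$, $x=z$ and $y=w$ are resolved by moving to the next coordinate, to exactly the hypotheses on $x,y,z,w$ — this is the derivation of the previous paragraph read in reverse. Since distinct admissible triples give distinct quadruples $(x,y,z,w)$, hence distinct classes, and every class has its canonical representative among these triples, the map $[x,y,z]\mapsto$ (its $R$-class) is the asserted bijection, which finishes the proof.

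The bijection with gap sequences is routine bookkeeping; the delicate point is the tie-breaking in the lexicographic comparisons — in particular recognising condition~(ii) as exactly what comparing a gap sequence against its rotation by two positions produces, and handling the boundary case $4\mid k$, where inequality~(i) is tight and the orbit has a nontrivial stabiliser. I expect that small case analysis to be the main, if modest, obstacle.
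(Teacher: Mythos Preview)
Your proof is correct and follows essentially the same approach as the paper: both assign to each class its cyclic gap (distance) vector $(x,y,z,w)$ and select a canonical rotation with $x$ minimal, then analyse ties among the $d_i$ to obtain conditions (i) and (ii). Your use of the lexicographically least rotation makes the tie-breaking more systematic than the paper's separate cases $x=w$, $x=z$, $x=y$, and you rightly isolate the boundary class $[k/4,k/4,k/4]$ when $4\mid k$, where inequality (i) is tight.
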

Before proving the result, we give an example.  When $S$ is the set of all $4$-combinations of $\{0,1,2,3,4,5,6,7\}$, $S$ can be partitioned into 10 disjoint classes: 
$[1,1,1]$, $[1, 1, 2]$, $[1,1,3]$, $[1,1,4]$, $[1,2,1]$, $[1,2,2]$, $[1,3,1]$, $[1,3,2]$ and $[2,2,2]$.\\

\noindent\begin{proof}
 Let $(x,y,z,w)$ be the distance vector corresponding to  equivalence class $[\{s_1,s_2,s_3,s_4\}]$.  
 Classes  $[\{s_1,s_2,s_3,s_4\}]$, $[x,y,z]$, $[y,z,w]$, 
 $[z,w,x]$ and $[w,x,y]$ are the same. Without loss of generality, we choose $[x,y,z]$ as class representative if $x \leq y$, $x \leq z$. Thus $1 \leq x \leq \frac{k}{4}$, $y=x,x+1,...,k-1$ and $z=x,x+1,...,k-1$.  We consider three cases: (i) $x=w$, (ii) $x=z$, (iii) $x=y$.  If $w=x$, then the classes $[x,y,z]$ and $[x,x,y]$ obtained from distance vector $(x,y,z,x)$ are the same equivalence class. The classes of the form $[x,x,y]$ are generated under case (iii) as well.  In order to avoid repetition, $w$ has to be strictly greater than $x$. That is, $w=k-x-y-z >x$ which implies $2x+y+z <k$. If $z=x$, then the classes $[x,y,z]$ and $[x,w,x]$ are the same where $y+w=k-2x$. Thus it is sufficient to consider the classes of the form $[x,y,x]$ for $y \leq \lfloor \frac{k-2x}{2} \rfloor $ only. Hence the lemma follows. 
\end{proof}

\noindent All the equivalence classes are enumerated by the following algorithm.

\begin{algorithmic}
 \STATE \textsc{Equivalence-Classes($k$)}
 \STATE \bf{Input}: $k$
\STATE \bf{Output}: All $[x,y,z]$ classes.

 \FOR{x$\gets 1$ \TO $\frac{k}{4}$ }
  \STATE{
   \FOR{y$\gets x$ \TO $k-1$ }
    \STATE{
     \IF{$y > \frac{k-2x}{2}$}
      \STATE{
       \FOR{z$\gets x+1$ \TO $k-2x-y-1$ }
        \STATE{add $[x,y,z]$}
       \ENDFOR}
      \ELSE
       \STATE{
        \IF{$y == \frac{k-2x}{2}$ and $x== \frac{k-2x}{2}$}
         \STATE{add $[\frac{k}{4}, \frac{k}{4}, \frac{k}{4}]$}
         \ELSE
          \STATE{
           \FOR{z$\gets x$ \TO $k-2x-y-1$ }
            \STATE{add $[x,y,z]$}
           \ENDFOR
        }\ENDIF
     }\ENDIF
    }\ENDFOR
   }\ENDFOR

\end{algorithmic}

\begin{theorem}
 Let $X=GF(g-1) \cup \{\infty\}$ and $G=PGL(2,g-1)$. If there exists a pair of vectors $u,v \in X^{k}$ such that each $d[x,y,z]$ has a representative from each of the orbits $2-15$, then there exists a $4$-$CA(2kg(g-1)(g-2)+g,k,g)$ covering array.
\end{theorem}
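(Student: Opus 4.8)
The plan is to verify directly that the array $[M^G,C]$ meets the definition of a strength-$4$ covering array: it is a $k\times(2k|G|+g)$ array over the $g$-symbol set $X$, where $|G|=|PGL(2,g-1)|=g(g-1)(g-2)$, so the only thing to check is that on every choice of four rows the columns, restricted to those rows, realise all $g^4$ vectors of $X^4$. Since the $g+11$ orbits of $G$ acting (coordinatewise) on $4$-tuples partition $X^4$ into the constant orbit (orbit $1$) and the non-constant orbits $2$ through $15$, and since $C$ already contributes all $g$ constant columns on every set of rows, it is enough to show that on every four-row set the columns of $M^G$ meet every non-constant orbit.

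First I would record what developing by $G$ does. For $a\in G$ the matrix $M^a$ is $M$ with $a$ applied to each entry, so on any fixed set of rows the columns of $M^a$ are exactly the images under $a$ of the columns of $M$ on those rows; hence on those rows the columns of $M^G=[M^a:a\in G]$ form the union, over the columns $\tau$ of $M$ on those rows, of the complete $G$-orbits $\{\tau a:a\in G\}$. Consequently it suffices to show that, on every four-row set, the columns of the small matrix $M$ already hit every non-constant orbit — then applying all of $G$ supplies every non-constant $4$-tuple.

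Next I would use the circulant structure of $M$. Fix rows $r_1<r_2<r_3<r_4$ and set $x=r_2-r_1$, $y=r_3-r_2$, $z=r_4-r_3$ and $w=k-(x+y+z)$, all positive. Reading each of the two circulant blocks of $M$ down these four rows, and letting the column index run over all $k$ cyclic shifts, shows that the set of $4$-tuples occurring in the columns of $M$ on rows $\{r_1,r_2,r_3,r_4\}$ is exactly $d[x,y,z]$. By the Lemma the equivalence class of $\{r_1,r_2,r_3,r_4\}$ is represented by one of the four cyclic rotations $[x,y,z]$, $[y,z,w]$, $[z,w,x]$, $[w,x,y]$, say $[x',y',z']$; and a short computation with the cyclic indices shows that $d[x,y,z]$ is obtained from $d[x',y',z']$ by applying one fixed cyclic permutation $\sigma$ of the four coordinates to every tuple. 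Any permutation of coordinates commutes with the coordinatewise action of $G$ on $X^4$, so $\sigma$ maps $G$-orbits bijectively onto $G$-orbits and fixes the constant orbit; hence it permutes the non-constant orbits $2$ through $15$ among themselves. Since by hypothesis $d[x',y',z']$ contains a representative of every non-constant orbit, so does $d[x,y,z]=\sigma\bigl(d[x',y',z']\bigr)$.

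Assembling these facts: on every four-row set the columns of $M$ hit every non-constant orbit, so the columns of $M^G$ contain every non-constant $4$-tuple, and $C$ supplies the constant ones; hence $[M^G,C]$ is a $4$-$CA$, and its size is $2k|G|+g=2kg(g-1)(g-2)+g$, as claimed. The one point needing care is the bookkeeping in the third paragraph: checking that the $4$-tuples appearing on a chosen set of four rows of $M$ are precisely the elements of $d[x,y,z]$, and that reducing to the canonical class representative $[x',y',z']$ of the Lemma costs no more than a cyclic permutation of coordinates. Everything else is either the definition of development by a group, the classification into $g+11$ orbits established above, or the trivial remark about the columns of $C$.
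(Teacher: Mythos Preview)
Your proof is correct and follows essentially the same line as the paper's: identify the chosen four-row set with its equivalence class $[x_0,y_0,z_0]$ via Lemma~1, observe that $d[x_0,y_0,z_0]$ meets every non-constant orbit by hypothesis, develop $M$ by $G$ to fill out those orbits, and append $C$ for the constants. You are in fact more careful than the paper on one point---the paper simply asserts that the tuples on rows $s_1,s_2,s_3,s_4$ of $M$ see every non-constant orbit because $d[x_0,y_0,z_0]$ does, whereas you make explicit that passing from $d[x,y,z]$ to the canonical $d[x',y',z']$ costs only a cyclic permutation of the four coordinates, which commutes with the $G$-action and hence merely permutes the non-constant orbits among themselves.
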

\begin{proof}
 Let $u,v \in X^k$ be vectors such that each $d[x,y,z]$ has a representation from each of the orbits $2-15$. Using $u,v$, we create the matrix $[M^G,C]$. Let $\{s_1,s_2,s_3,s_4\}$ be a member in $S$. By Lemma 1, there exists three positive integers $x_0$, $y_0$ and $z_0$ such that $\{s_1,s_2,s_3,s_4\} \in [x_0,y_0,z_0]$. It is given that $d[x_0,y_0,z_0]$ has a representative from each of the orbits 2-15. 
In other words,  if we look at the rows $s_1$, $s_2$, $s_3$, $s_4$ of $M$, we see representative from each of the $g+11$ orbits. 
 Consequently, because $PGL(2,g-1)$ is 3-transitive on $X$, $[M^G,C]$ is a $4$-$CA(2kg(g-1)(g-2)+g,k,g).$
 \end{proof}

At this stage, we make a few remarks about the size of equivalence classes defined by above choices of $ x, y$ 
and $z $.
\begin{enumerate}
\item $ k \not\equiv 0$ mod $2$ :
\\If $ k $ is an odd integer,  each class contains exactly $ k $ distinct choices from the collection of  $ k\choose 4 $ 
choices and hence there are  $ l = \frac{(k-1)(k-2)(k-3)}{24} $ distinct classes of size $ k $.
\item $ k \equiv 0 $ mod $2 $ :
\\If $ k $ is an even integer,  $ \frac{k}{2} $ can be written as sum of two positive integers $ a $ and $ b $ where $ a 
\leq b $ in $ \lfloor \frac{k}{4} \rfloor $ different ways.
\\\textit{Case 1} : If $ k \not\equiv 0 $ mod $4 $,  a class of the form $ [a,b,a] $ contains only $ \frac{k}{2} $ distinct 
choices. There are total $ \lfloor \frac{k}{4} \rfloor $ equivalence classes of the form $[a,b,a]$ with size $\frac{k}{2}$ 
and the remaining classes are of size $ k $.
\\\textit{Case 2} : If $ k \equiv 0 $ mod $4 $,  a class of the form $ [a,b,a] $ contains only $ \frac{k}{2} $ distinct 
choices and a class of the form $ [a,a,a] $  where $ a= \frac{k}{4} $ contains only $ \frac{k}{4} $ distinct choices. Here  
we get total $ \frac{k}{4} -1 $ equivalence classes of size $ \frac{k}{2} $ , exactly one class of size $ \frac{k}{4} $ and 
the remaining  classes are of size $ k $.
\end{enumerate} 

\noindent For $k=8$, there are 10 equivalence classes. The classes $[1,3,1]$ and $[2,2,2]$ are of size 4 and 4 respectively and the remaining 8 classes are of size 8 each. Thus $8\times 8 +4+2= {8\choose 4}$.  

\subsection{Case 2: Two vectors $u,v$ and a matrix $C_1$}

If we do not find vectors $u$ and $v$ such that each $d[x,y,z]$ contains a 
representative from each of the orbits $2-15$,  we look for vectors that produce an array with maximum 
possible coverage. 
In order to complete the covering conditions, we add a small matrix $C_1$. We give an example below to illustrate the technique.

\begin{example} \rm Let $k=21$ and $g=3$. Here we do not find vectors $u$ and $v$ such that  each $d[x,y,z]$ contains a 
representative from each of the orbits $2-15$.  For $k=21$, there are $285$ $[x,y,z]$ classes. All classes $[x,y,z]$ are obtained by the algorithm \textsc{Equivalence-Classes}. 
One can check that for the  vectors $$u=00001010\infty1\infty\infty10\infty\infty001\infty1$$ $$v=0000100\infty00\infty10001\infty111\infty$$  
there is a representative from each orbit $2-15$ on 276 of the $d[x,y,z]$ classes. Table \ref{orbit} shows  nine classes  which do not have representative from all the orbits: 
\begin{table}[ht] 
\caption{List of classes not having representative from all the orbits}
\label{orbit}
\begin{center}
\begin{tabular}{|c|c|}
\hline
 Class & Missing orbits  \\
\hline
 $d[1,2,2]$ & $10$ \\
 $d[1,5,6]$ & $2$ \\
 $d[1,6,12]$ & $5$ \\
 $d[1,13,5]$ & $9$ \\
 $d[2,3,8]$ & $6$ \\
 $d[2,7,3]$ & $10$ \\
 $d[2,12,3]$ & $13$ \\
 $d[3,6,8]$ & $6$ \\
 $d[3,7,7]$ & $10$ \\
 \hline
\end{tabular}
\end{center}
\end{table}

\noindent  In order to complete the covering conditions, we add a small matrix $C_1$. 
\begin{center}
$C_{1}=\left(\begin{array}{ccccccccc}
\infty&0&1&1&0&\infty&\infty&\infty&1 \\
\infty&1&1&\infty&0&0&0&1&0 \\
0&1&\infty&1&1&0&1&\infty&0 \\
0&\infty&0&0&1&0&0&0&0 \\
1&0&0&0&0&\infty&1&\infty&0 \\
\infty&0&0&\infty&0&\infty&\infty&\infty&1 \\
\infty&\infty&\infty&1&0&0&0&1&0 \\
0&1&\infty&1&\infty&0&1&1&0 \\
0&0&1&0&1&0&0&0&0 \\
1&0&0&0&1&\infty&1&\infty&0 \\
\infty&\infty&\infty&\infty&0&0&1&\infty&0 \\
\infty&1&\infty&1&0&1&\infty&\infty&\infty \\
0&1&1&0&\infty&1&\infty&1&0 \\
0&\infty&1&0&\infty&\infty&0&0&0 \\
0&0&0&\infty&1&\infty&1&0&0 \\
\infty&0&0&1&\infty&0&0&0&\infty \\
\infty&1&0&0&0&1&1&1&0 \\
1&1&1&1&0&0&1&0&0 \\
0&\infty&\infty&0&1&0&1&\infty&1 \\
0&\infty&\infty&\infty&\infty&\infty&1&0&0 \\
\infty&0&0&1&\infty&0&\infty&\infty&1\\

\end{array}\right).$
                                            
\end{center}
We use computer search to find matrix $C_1$. 
This matrix has the property that every choice of four rows in $[1,2,2]$, $[2,7,3]$ and $[3,7,7]$ contains at least one representative from orbit $10$; 
every choice of four rows in $[2,3,8]$ and $[3,6,8]$  contains at least one representative from  orbit $6$; each choice of four rows in $[1,5,6]$, $[1,6,12]$, $[1,13,5]$ and $[2,12,3]$
 contains 
at least one representative from orbit $2$, $5$, $9$ and $13$ respectively.  We also need to use 
the following matrix 
\begin{center} $C=\begin{pmatrix}0&1&\infty\\0&1&\infty\\\vdots&\vdots&\vdots\\0&1&\infty\\\end{pmatrix}$
\end{center}
\noindent
to ensure the coverage of all identical $4$-tuples. Therefore, $[M^G, ~C_1^G,~ C]$ is a 4-$CA(315,21,3)$. 
\end{example}

\subsection{Case 3: One vector $u$ and a matrix $C_1$} 
For $k=37 \mbox{~to~} 58$, we use one starter vector and a $C_1$ matrix of order $k\times \ell $ with $\ell< k$. 
Tables \ref{newresult1}, \ref{newresult2}, \ref{newresult3} and \ref{newresult4}  give a list of starter vectors and matrix $C_1$ that improves the best known bounds. When the new bound is marked with an asterisk, post-optimization has been applied (see Section \ref{sec:postop}). 

\begin{table}[ht]\small
\caption{\bf Improved strength four covering arrays for $g=3$.}
\label{newresult1}
\begin{tabular}{|l|l|l|l|}
\hline
$k $ & Starter vectors and matrix $C_1$ & New  & Old \\
        &            & bound & bound\\
\hline &&& \\
21 &\pbox{10cm}{ $u=(00001010\infty1\infty\infty10\infty\infty001\infty1)$ \\
$v=(0000100\infty00\infty10001\infty111\infty)$\\
$\arraycolsep=1.2pt\def\arraystretch{1}
C_1=\left( \begin{array}{cccccccccccccccccccccc}
\infty&\infty&0&0&1&\infty&\infty&0&0&1&\infty&\infty&0&0&0&\infty&\infty&1&0&0&\infty\\
0&1&1&\infty&0&0&\infty&1&0&0&\infty&1&1&\infty&0&0&1&1&\infty&\infty&0 \\
1&1&\infty&0&0&0&\infty&\infty&1&0&\infty&\infty&1&1&0&0&0&1&\infty&\infty&0 \\
 1&\infty&1&0&0&\infty&1&1&0&0&\infty&1&0&0&\infty&1&0&1&0&\infty&1 \\
 0&0&1&1&0&0&0&\infty&1&1&0&0&\infty&\infty&1&\infty&0&0&1&\infty&\infty \\
 \infty&0&0&0&\infty&\infty&0&0&0&\infty&0&1&1&\infty&\infty&0&1&0&0&\infty&0 \\
 \infty&0&1&0&1&\infty&0&1&0&1&1&\infty&\infty&0&1&0&1&1&1&1&\infty \\
 \infty&1&\infty&0&\infty&\infty&1&1&0&\infty&\infty&\infty&1&0&0&0&1&0&\infty&0&\infty \\
 1&0&0&0&0&1&0&0&0&0&0&\infty&0&0&0&\infty&0&0&1&0&1  \\
\end{array}
\right)^T
$ }
& 305*
      &315\\

&&&\\ 
&&&\\
22 &\pbox{20cm}{$u=(0000011\infty0\infty0110\infty1\infty\infty\infty01\infty)$ \\
$v=(00010010\infty1\infty\infty0\infty01\infty10\infty\infty1)$\\
$\arraycolsep=1.2pt\def\arraystretch{1}
C_{1}=\left(\begin{array}{cccccccccccccccccccccc}
0&\infty&\infty&0&0&0&\infty&\infty&\infty&0&0&0&\infty&\infty&0&0&0&\infty&\infty&\infty&0&0\\
\infty&\infty&0&0&0&\infty&\infty&\infty&0&0&\infty&\infty&\infty&0&0&0&\infty&\infty&\infty&0&0&\infty\\
1&\infty&1&\infty&0&\infty&0&1&\infty&1&\infty&1&\infty&1&\infty&0&0&0&1&\infty&1&0\\
0&1&1&1&0&0&1&1&\infty&0&\infty&1&1&0&0&1&1&0&0&\infty&1&\infty\\
\infty&0&0&\infty&\infty&1&0&1&\infty&0&0&\infty&\infty&\infty&0&0&1&\infty&0&0&1&\infty\\
\infty&0&\infty&1&1&1&0&1&\infty&1&\infty&0&0&1&1&1&0&0&1&1&\infty&0\\
0&0&0&\infty&\infty&1&0&0&\infty&\infty&\infty&0&\infty&0&1&0&1&0&0&0&1&\infty\\
\end{array}
\right)^T
$} &307*&315\\
&&&\\
&&&\\
27&\pbox{20cm}{$u=(1101011\infty\infty\infty0\infty00\infty\infty1\infty011\infty0100\infty)$\\
$v=(11\infty0\infty1011\infty\infty\infty0\infty0\infty01\infty00001\infty\infty\infty)$\\
$\arraycolsep=1.2pt\def\arraystretch{1}
C_{1}=\left(\begin{array}{ccccccccccccccccccccccccccc}
0&1&0&1&0&1&1&0&1&0&1&0&1&0&1&0&1&0&1&0&1&0&1&0&1&0&1\\  
0&1&0&1&0&1&0&1&0&1&0&1&0&1&0&1&1&0&1&0&1&0&1&0&1&0&1  \\
0&\infty&0&\infty&0&\infty&0&\infty&0&\infty&0&\infty&0&\infty&0&\infty&0&\infty&0&\infty&0&\infty&0&\infty&\infty&0&\infty \\ 
0&\infty&0&0&0&0&0&0&0&0&0&1&0&0&0&0&0&0&0&0&0&0&0&0&\infty&0&0        \\       
 \end{array}
\right)^T$} & 345*& 378\\     
&&&\\
\hline
\end{tabular}
\end{table}

\begin{table}[ht]\small
\caption{\bf Improved strength four covering arrays for $g=3$ (continued).}
\label{newresult2}
\begin{tabular}{|l|l|l|l|}
\hline
$k $ & Starter vectors and matrix $C_1$ & New & Old \\
           &   & bound & bound\\
\hline  
&&&\\
28&\pbox{20cm}{$u=(1\infty\infty00\infty\infty1\infty01101111\infty0\infty0101\infty\infty\infty1)$\\
$v=(\infty1011\infty110\infty000\infty1\infty\infty10\infty\infty0\infty00\infty01)$\\
$\arraycolsep=1pt\def\arraystretch{.8}
C_{1}=\left(\begin{array}{cccccccccccccccccccccccccccc}
\infty&0&\infty&0&0&\infty&0&0&\infty&0&\infty&\infty&0&\infty&\infty&0&\infty&0&0&\infty&0&0&\infty&0&\infty&\infty&0&\infty\\
\infty&0&0&1&0&1&\infty&0&\infty&1&0&\infty&\infty&0&1&0&0&\infty&0&\infty&1&0&1&\infty&0&1&1&0\\
1&0&\infty&0&\infty&\infty&0&\infty&\infty&1&\infty&1&0&0&0&1&\infty&1&0&\infty&1&\infty&1&0&\infty&0&1&1\\
0&\infty&0&\infty&0&0&0&0&\infty&0&1&0&1&0&\infty&0&1&0&\infty&0&0&1&0&0&0&0&0&0  \\              
 \end{array}
\right)^T$} &360*& 383\\              
&&&\\         
&&& \\
30&\pbox{20cm}{$u=(011\infty11\infty\infty\infty001\infty\infty\infty1\infty10\infty\infty0\infty1100\infty01)$\\
\\$v=(11\infty\infty01101000\infty101\infty1\infty0\infty000010\infty\infty\infty)$\\ }
 & 363 & 393\\  
&&&\\        
       
&&&\\
32&\pbox{20cm}{$u=(\infty1100010\infty111\infty1\infty010\infty\infty0100\infty\infty0\infty\infty010)$\\
\\$v=(\infty000\infty1\infty\infty0\infty000110\infty\infty100\infty0\infty11\infty11111)$\\}
 & 387 & 409\\ 
&&&\\
&&&\\
33&\pbox{20cm} ~~Obtained from $CA(387, 32,3)$
 & 387 & 417\\ 
&&&\\
&&&\\
34&\pbox{20cm}{$u=(00\infty101\infty\infty\infty1001\infty010\infty\infty0\infty0\infty01\infty\infty0\infty11111)$\\
\\$v=(1100\infty1\infty01\infty10110\infty\infty0\infty\infty011\infty101001\infty000)$\\}
 &410* & 423\\ 
&&&\\ 
&&&\\
35&~~Obtained from $CA(411, 34,3)$ 
 & 411& 429\\ 
&&&\\ 
&&&\\ 

35&\pbox{20cm}{$u=01\infty0\infty\infty1000\infty01\infty\infty0\infty1\infty111\infty\infty\infty01\infty01000\infty1$\\
\\$v=0\infty00111\infty0\infty110\infty11\infty110\infty010010000\infty1\infty\infty0$\\} & 423& 429\\ 
&&&\\ 

\hline
\end{tabular}
\end{table}

\begin{table}[ht]\small
\caption{\bf Improved strength four covering arrays for $g=3$ (continued).}
\label{newresult3}
\begin{tabular}{|l|l|l|l|}
\hline
$k$  &  Starter vectors and matrix $C_1$ & New & Old \\
 & & bound & bound\\
\hline
&&&\\ 

36 & ~~Obtained from $CA(423, 35,3)$
&423 & 441\\
&&&\\
&&&\\
37&\pbox{20cm}{$u=(001\infty10\infty1\infty01000\infty1100\infty101111\infty001\infty\infty\infty\infty00\infty)$\\ $C_1$: $37 \times 35$ matrix}
 & 433* & 441\\ 
&&&\\
&&&\\
39&\pbox{20cm}{$u=(001\infty\infty11\infty11\infty0001\infty11\infty101\infty\infty\infty1\infty0\infty0010\infty00\infty\infty0)$\\$C_1$: $39 \times 34$ matrix}
 & 441 & 453\\ 
&&&\\
&&&\\
41&\pbox{20cm}{$u=(\infty001\infty010\infty\infty0\infty0101111\infty\infty011\infty\infty10000\infty0\infty\infty10\infty0\infty1)$\\$C_1$: $41 \times 34$ matrix \\}
 & 453 & 465\\ 
&&&\\
42&\pbox{20cm}{$u=(\infty0111\infty1\infty\infty100\infty101\infty01000\infty011\infty1010011\infty00\infty1\infty\infty\infty)$\\$C_1$: $42 \times 35$ matrix }
 & 465 & 471\\ 
&&&\\
&&&\\
46&\pbox{20cm}{$u=(\infty00000\infty1100010\infty101\infty\infty1\infty01\infty00110\infty\infty\infty\infty11\infty1101\infty101\infty)$\\$C_1$: $46 \times 33$ matrix}
 & 477 & 483\\ 
&&&\\
&&&\\
47&\pbox{20cm}{$u=(\infty0011\infty1101\infty1\infty000\infty1\infty01\infty00\infty111010\infty00\infty\infty\infty10\infty\infty1\infty\infty1\infty\infty)$\\$C_1$: $47 \times 33$ matrix}
 & 483 & 489\\ 
&&&\\
&&&\\
48&\pbox{20cm}{$u=(01\infty\infty\infty11\infty01\infty1010111\infty\infty001\infty\infty\infty0\infty110010\infty0\infty\infty000100\infty00\infty)$\\$C_1$: $48 \times 33$ matrix}
 & 489 & 495\\ 
&&&\\
&&&\\
51&\pbox{20cm}{$u=(\infty0\infty\infty101011\infty000\infty\infty11\infty1\infty1001\infty\infty\infty\infty\infty11$\\ $\infty0\infty1\infty01111001001\infty00)$\\ $C_1$: $51 \times 32$ matrix}
 & 501 & 507\\ 
&&&\\

\hline
\end{tabular}
\end{table}

\begin{table}[ht]\small
\caption{\bf Improved strength four covering arrays for $g=3$ (continued).}
\label{newresult4}
\begin{tabular}{|l|l|l|l|}
\hline
$k$  &  Starter vectors and matrix $C_1$ & New  & Old \\
  &   & bound & bound\\
\hline

&&&\\
55&\pbox{20cm}{$u=(1\infty\infty1\infty1\infty0\infty111\infty\infty1\infty0010\infty00\infty0011011\infty1\infty0$\\ $00\infty11\infty\infty0101\infty001110\infty\infty)$\\$C_1$: $55 \times 30$ matrix}
 & 513 & 519\\ 
&&&\\
&&&\\
57&\pbox{20cm}{$u=(\infty10\infty\infty\infty0011\infty01\infty10\infty11001\infty1\infty\infty0011\infty\infty110$\\ $110111010\infty\infty1\infty0\infty0000\infty01)$\\$C_1$: $57 \times 29$ matrix }
 & 519 & 531\\ 
&&&\\
&&&\\
58&\pbox{20cm}{$u=(\infty0\infty\infty00101\infty0010\infty0\infty1\infty1000\infty0\infty11001\infty00010\infty111$\\ $\infty\infty\infty11011011\infty\infty0\infty0\infty)$\\$C_1$: $58 \times 29$ matrix }
 & 525 & 531\\ 
&&&\\
&&&\\
&&&\\
63&\pbox{20cm}{$u=(1101\infty10\infty100\infty\infty\infty00101\infty\infty0\infty0\infty\infty1\infty010\infty11\infty\infty\infty01$\\ $10\infty10110001\infty0\infty11\infty\infty0\infty0\infty11)$\\$C_1$: $63 \times 26$}
 & 537 & 549\\ 
&&&\\
&&&\\
67&\pbox{20cm}{$u=(010101\infty1100\infty100\infty11\infty\infty\infty\infty0110\infty01111\infty\infty1011\infty0\infty$ \\ $1101\infty0\infty\infty0\infty101\infty\infty1\infty\infty10000\infty00)$\\$C_1$: $67 \times 25$}
 & 555 & 561\\ 
&&&\\
&&&\\
70&\pbox{20cm}{$u=(1\infty001\infty11\infty1\infty\infty\infty0\infty11\infty0\infty0\infty1\infty00011\infty0\infty\infty\infty\infty111$\\$\infty0101001\infty010011\infty\infty010000\infty10\infty\infty1100)$\\$C_1$: $70 \times 24$}
 & 567 & 573\\ 
&&&\\
&&&\\
72&\pbox{20cm}{$u=(\infty\infty000\infty1010\infty\infty\infty\infty\infty010111000\infty11011\infty011101\infty0\infty\infty1\infty00$\\$\infty1\infty1\infty\infty010\infty101100\infty01\infty\infty\infty1\infty\infty0\infty)$\\$C_1$: $72 \times 24$}
 & 573 & 579\\ 
&&&\\
&&&\\
74&\pbox{20cm}{$u=(1\infty0010\infty\infty01\infty\infty\infty111\infty\infty1\infty\infty0100\infty\infty\infty\infty10\infty1011011\infty$\\$001100001\infty\infty0\infty0\infty0\infty\infty101100\infty1\infty01\infty111\infty)$\\$C_1$: $74 \times 24$}
 & 585 & 591\\ 
&&&\\

\hline
\end{tabular}
\end{table}

\section{Improving the  solutions}  
We examine two methods to obtain small improvements on the computational results obtained.

\subsection{Extending a solution}  

Until this point, starter vectors have been developed by applying a cyclic rotation of the starter vectors in addition to the action of PGL on the symbols.  
As in \cite{karen}, one can also consider fixing one row, and developing the remaining $k-1$ cyclically. 
This can be viewed as first finding a solution of the type already described on $k-1$ rows,  but requiring an additional property.
For the 4-subsets of $\{0,\dots,k-2\}$, equivalence classes are defined as before, with arithmetic modulo $k-1$:
$$[\{s_1,s_2,s_3,s_4\}]=\{\{s_1+d,s_2+d,s_3+d,s_4+d\}|0\leq d \leq k-2\}.$$ 
For 3-subsets $\{t_1,t_2,t_3\}$ of $\{0,\dots,k-2\}$ we define further equivalence classes as
$$[\{t_1,t_2,t_3,k-1\}]=\{\{t_1+d,t_2+d,t_3+d,k-1\}|0\leq d \leq k-2\}.$$ 
If we can place an entry  in position $k-1$ to extend the length of each starter vector so that every one of the (old and new) equivalence classes represents each of the orbits $2-15$, we obtain a 4-CA of degree $k$.

The potential advantage of this approach is that a solution for degree $k-1$ can sometimes be extended to one of degree $k$ without increasing the size of the covering array produced.  
Indeed we found that the solutions for $k-1 \in \{32,34,35\}$ do ensure that the new equivalence classes also represent each of the orbits $2-15$.  
Hence we obtain the following improvements.  
Old indicates the bound obtained by applying our methods to $k$; Improved gives the bound by applying the method to $k-1$ and ensuring that the new equivalence classes represent all orbits:
\begin{center}
\begin{tabular}{|rrr|rrr|rrr|}
\hline
$k$ & Old & Improved &$k$ & Old & Improved &$k$ & Old & Improved \\
\hline
33 & 399 & 387 & 35 & 423 & 411 & 36 &435  & 423 \\
\hline
\end{tabular}
\end{center}

\subsection{Randomized Post-optimization}\label{sec:postop}

Nayeri, Colbourn, and Konjevod \cite{nayeri} describe a post-optimization strategy which, when applied to a covering array, exploits flexibility of symbols in an attempt to reduce its size.  We applied their method to the arrays provided here, and to arrays obtained by removing one or more rows.  Because the method is described in detail elsewhere, we simply report improvements for eight values of $k$.  Basic gives the bound from starter vectors, Improved gives the bound on 4-$CAN(k,3)$ after post-optimization:
\begin{center}\small
\begin{tabular}{|rrr|rrr|rrr|}
\hline
$k$ & Basic & Improved &$k$ & Basic & Improved &$k$ & Basic & Improved  \\
\hline
19 & 309 & 300 & 20 & 309 & 303 & 21 &309  & 305 \\
22 & 309 & 307 & 27 & 351 & 345 & 28 & 363 & 360 \\
34 & 411 & 410 & 37 & 435 & 433  &&&\\
\hline
\end{tabular}
\end{center}

\section{Covering arrays with budget constraints problem}  

In this section we present several strength four testing arrays with high coverage 
measure for $g\geq 3$. The coverage measure $\mu_4(A)$ of a strength four testing array $A$ 
is defined by the ratio between the number of distinct $4$-tuples  contained in the column vectors of $A$ and the total number of 
$4$-tuples given by ${k\choose 4}g^4$. Note that the coverage measure of a covering array is always one.  
For computational convenience, we rewrite the coverage measure in terms of equivalence classes $[x,y,z]$ and $d[x,y,z]$ as follows: 
$$
\mu_4(A)=\frac{\sum\limits_{x,y,z}^{}{|[x,y,z]|\times  \mbox{number of distinct 4-tuples covered by } d[x,y,z]}}{{k \choose 4} g^{4} }.
$$
 We search by computer to find vectors $v$ with very high coverage measures. Tables \ref{coverage1} and \ref{coverage2} show vectors with high coverage, the number of test cases $(n)$ generated by our technique, and the best known size with full coverage. Comparison of our construction with best known covering array sizes shows that our construction produces significantly smaller testing arrays with very high coverage measures. 
  
 \begin{table}[ht]\footnotesize
\caption{\bf A comparison of the number of test cases $(n)$ produced by our construction with high coverage measure and best known $n$ for full coverage. For $g=5$, the elements of $GF(4)$ are represented as 0,1, 2, and 3; here 2 stands for $x$ and 3 stands for $x+1$. }
\label{coverage1}
\begin{tabular}{|c|c|c|c|c|c|}
\hline
$~~~~(g,k)~~~~$ & Vector $v$ with good coverage & Our Results & ~~Best  known  \\
              &                    & ~~$n$ ($\mu$) ~~      &  $n$~~\cite{Colbourn} \\
\hline
(3,16) & 00001001$\infty\infty$011$\infty$1$\infty$ &  99 (0.828) & 237\\
(3,17) & 0000010$\infty\infty$101$\infty$01$\infty$1  &105 (0.851) & 282\\
(3,18) & 00010$\infty$0$\infty$1001$\infty$111$\infty\infty$  &111 (0.864 ) &293 \\
(3,19) & 000010010$\infty$01$\infty$0$\infty$111$\infty$ &117 (0.883) &305 \\
(3,20) & 0000110101$\infty$0$\infty$10$\infty\infty$11$\infty$ &123 (0.892) & 314\\
(3,21) & 00001010$\infty$1$\infty\infty$10$\infty\infty$001$\infty$1 &129 (0.906) & 315 \\
(3,22) & 0000011$\infty$0$\infty$0110$\infty$1$\infty\infty\infty$01$\infty$ &135 (0.913) &315 \\
(3,23) & 0000001$\infty\infty$0101$\infty$10$\infty$10$\infty\infty\infty$1 &141 (0.923) & 315\\
(3,24) & 00000001$\infty\infty$0101$\infty$10$\infty$101$\infty\infty$1 &147 (0.924) & 315\\
(3,25) & 0000000011$\infty$0$\infty$011$\infty$01$\infty$0$\infty$11$\infty$ &153 (0.930) &363 \\
(3,28) & 1$\infty\infty$00$\infty\infty$1$\infty$01101111$\infty$0$\infty$0101$\infty\infty\infty$1 &171 (0.957) &383 \\
(3,29) &  010$\infty$00$\infty$1$\infty$0$\infty\infty\infty$101$\infty$00$\infty$000111$\infty$10 &177 (0.961) &392 \\
(3,30) & 011$\infty$11$\infty\infty\infty$001$\infty\infty\infty$1$\infty$10$\infty\infty$0$\infty$1100$\infty$01 &163 (0.969) &393 \\
(3,35) & 01$\infty$0$\infty\infty$1000$\infty$01$\infty\infty$0$\infty$1$\infty$111$\infty\infty\infty$01$\infty$01000$\infty$1 &213 (0.979) &429 \\
(3,36) & 11$\infty$0110$\infty\infty$00$\infty$111101011$\infty$001$\infty\infty\infty\infty\infty$100$\infty$0$\infty$ &219 (0.981) &441 \\
(3,38) & 1$\infty$1$\infty$111$\infty\infty$010$\infty$10$\infty\infty$00010$\infty\infty$0$\infty\infty\infty$1101$\infty\infty$100$\infty$ &231 (0.985) &447 \\
(3,39) & 001$\infty\infty$11$\infty$11$\infty$0001$\infty$11$\infty$101$\infty\infty\infty$1$\infty$0$\infty$0010$\infty$00$\infty\infty$0 &237 (0.986) &453 \\
(3,40) & 100$\infty\infty$00001$\infty\infty$1$\infty$10$\infty$000$\infty\infty\infty$0$\infty$10$\infty\infty$1$\infty$1$\infty$0111$\infty$01 &243 (0.988) &465 \\

&&&\\
  & & & \\
(4,18) & 00010021$\infty\infty\infty$21020$\infty$2 &  436 (0.851) &760 \\
(4,19) & 0000121011$\infty$01$\infty$0$\infty$221 &  460 (0.866) &760 \\
(4,20) & 0000112101202$\infty$0221$\infty$2 &  484 (0.878) & 760\\
(4,21) & 0000011021010$\infty$2$\infty$0221$\infty$ &  508 (0.887) &1012 \\
(4,22) & 0000001102$\infty$02021$\infty\infty$01$\infty$1 & 532 (0.894) &1012 \\
(4,23) & 00000001210210$\infty\infty$20112$\infty$1 &  556 (0.898) &1012\\
(4,24) & 00000000121$\infty$011$\infty$02$\infty$0$\infty$112 &  580 (0.899) &1012\\
(4,25) & 000000000121220$\infty$011$\infty$2012$\infty$ &  604 (0.901)  &1012 \\
(4,26) & 00100$\infty$2221110102$\infty$0022$\infty$020$\infty$2 &  628 (0.921)  &1012 \\
(4,27) &0100$\infty$2221110102$\infty$0022$\infty$020$\infty$2 &  652 (0.928)  &1012 \\
(4,28) &  01110$\infty$0102$\infty$021110022001$\infty$1001 &  676 (0.933)  &1012 \\
(4,29) & 0$\infty\infty$122101$\infty$000220200221220$\infty$02 &  702 (0.937)  &1012 \\
(4,30) & 10$\infty$20$\infty$020$\infty$2$\infty$2$\infty$01$\infty$2222$\infty$022002$\infty$1 &  726 (0.943)  &1012 \\
\hline
\end{tabular}
\end{table}

 \begin{table}[ht]\footnotesize
\caption{\bf A comparison of the number of test cases $(n)$ produced by our construction with high coverage measure and best known $n$ for full coverage (continued). }
\label{coverage2}
\begin{tabular}{|c|c|c|c|c|c|}
\hline
$~~~~(g,k)~~~~$ & Vector $v$ with good coverage & Our Results & ~~Best  known  \\
              &                    & ~~$n$ ($\mu$) ~~      & $n$~~ \cite{Colbourn} \\
 & & & \\

(5,21)&110131300$\infty$30010$\infty\infty$3203&  1265 (0.834) &1865\\
(5,22) & 3$\infty$32011200$\infty\infty$00$\infty$0$\infty$10010&  1325 (0.842) &1865\\
(5,23) & 0002$\infty$03100$\infty$203021332320&  1385 (0.854) &1865\\
(5,24)& 003$\infty$21022212300032302310&  1445 (0.860) &1865\\
(5,25)& $\infty$200$\infty$0$\infty\infty$31020$\infty$300303$\infty\infty$33&  1505 (0.869) &2485\\
(5,26)& 202002211000$\infty$0121031$\infty\infty$2300&  1565 (0.873) &2485\\
(5,27)& $\infty\infty$03002030$\infty$000$\infty$11$\infty$0031301$\infty$3&  1625 (0.880) &2485\\
(5,28)& 013333130320$\infty$1$\infty$1003200310300&  1685 (0.883) &2485\\
(5,29)& 00012212$\infty$010$\infty$3110031020031010&  1745 (0.891) &2485\\
(5,30)& ~~ 33001$\infty$0$\infty$000330$\infty\infty$010012$\infty$1313001~~&  1805 (0.894) &2485\\
(5,31)& 033$\infty$21333010313$\infty$303320030012020&  1865 (0.895) &2485\\
(5,32)& 310031000$\infty$330130321$\infty\infty$03031111310&  1925 (0.897) &2485\\
(5,33)& $\infty$0010$\infty\infty$3$\infty$0$\infty$2$\infty$01$\infty$00$\infty$12222$\infty\infty$03$\infty$020$\infty$&  1985 (0.904) &2485\\
(5,34)& $\infty\infty$3$\infty$00101001$\infty$0$\infty$001$\infty$002$\infty$01110231112&  2045 (0.906) &2485\\
(5,35)& 1203003303$\infty$0$\infty$013233310$\infty$032020003220&  2105 (0.906) &2485\\
(5,36)& 12022$\infty$3203230023223220001010200$\infty$2230&  2165 (0.912) &2485\\
 & & & \\

(6,25) & 000403014003033404320$\infty$1$\infty\infty$&  3006 (0.811) & 6325\\
(6,26) & $\infty$0$\infty$40021404010013010011444&  3126 (0.819) & 6456\\
(6,27)& 433$\infty\infty$01$\infty\infty$20$\infty$03020$\infty\infty$0$\infty$00401$\infty$&  3246 (0.826) &6606\\
(6,28)& 4023031100232200$\infty$21$\infty\infty$2020020&  3366 (0.829) &6714\\
(6,29)& 00$\infty$40023103301343401230334400&  3486 (0.834) &6852\\
(6,30)& 1$\infty\infty\infty$42$\infty$4040004$\infty$104$\infty$03034$\infty\infty$0300&  3606 (0.836) &6966\\
(6,31)& 44122002$\infty$2000020202031$\infty$42044001&  3726 (0.838) &7092\\
(6,32)& 44441341$\infty$424000$\infty\infty$040004410103400&  3846 (0.846) &7200\\
(6,33)& 0330344$\infty$0232133100313000030$\infty$4303$\infty$&  3966 (0.855) &7320\\

\hline
\end{tabular}
\end{table}

\section{Conclusions} In this paper, we present a construction method of strength four covering arrays with three symbols that combines an algebraic technique 
with computer search. This method improves the current best known upper bounds on 4-$CAN(k,g)$ for $21\leq k\leq 74$ and $g=3$. 
 We have also proposed a construction of strength four covering arrays with budget constraints.  In order to test 
software  with 25 parameters each having three values, our construction can generate a test suite with 153 test cases 
that ensure  
with 
 probability $0.93$ that software failure cannot be caused due to interactions of two, three or four parameters whereas the best known covering array in \cite{Colbourn}  requires 363 test cases
for full coverage.  The results show that the proposed method
could reduce the number of test cases significantly while compromising only slightly on the coverage.

\section*{Acknowledgements}
The second author gratefully acknowledges support from the Council of Scientific and Industrial Research (CSIR), India, during the work 
under CSIR senior research fellow scheme.   
The fourth author's research was supported in part by the National Science Foundation under Grant No. 1421058.

\section*{References}
\bibliographystyle{elsarticle-num}

\end{document}